\newcommand{\be}{\begin{equation}}
\newcommand{\ee}{\end{equation}}
\newcommand{\sset}[1]{\left\{#1\right\}}
\newcommand{\prob}[1]{\hbox{Pr}\left\{#1\right\}}
\newtheorem{thm}{Theorem}
\newtheorem{cor}{Corollary}
\newtheorem{lem}{Lemma}
\title{Steady Continuous Monitoring is (Just Barely) Impossible for Tests of Unbounded Length}
\author{Eric Bax and Alex Shtoff}
\begin{document}

\maketitle

\begin{abstract}
AB testing evaluates the difference between a control and a treatment in a statistically rigorous manner. Continuous monitoring allows statistical evaluation of an AB test as it proceeds. One goal of continuous monitoring is early stopping -- confirming a statistically significant difference between control and treatment as soon as possible. Another goal is to maintain some statistical capability to discover significant differences later in the test if they cannot be confirmed earlier. These goals are in conflict -- looser requirements for early stopping leave us with more stringent ones for later. This paper shows that it is impossible to maintain a constant requirement for significance for tests that have no a priori stopping time, but we can come arbitrarily close to that goal by using tests that require repeated significant results to confirm statistically significant differences between treatment and control.
\end{abstract}



\section{Introduction}
Recently, there has been renewed interest in \textit{group sequential methods} \cite{wald47,peto76,pocock77,wang87,jennison00,lewis23}, or \textit{interim analysis methods} \cite{huang17,grayling18,ciolino23} for long-running AB tests, which may lack a priori limits on run time or number of observations. We refer to such tests as unbounded AB tests. Methods that control Type I error for such tests include \textit{always-valid bounds} and \textit{continuous monitoring methods} \cite{deng16,johari21,maharaj23,grunwald24,waudbysmith24}. 

Most group sequential methods use some form of $\alpha$ spending -- partitioning a Type I error budget over the interim analyses \cite{lan83,lewis23}. Adapting them to unbounded tests requires partitioning the error budget over an unbounded number of interim analyses. Dividing the budget more finely (while controlling Type I error) makes conditions for ending the test more stringent, increasing test time and expense and increasing the likelihood of Type II error. 

To counter those effects, always-valid bounds and continuous monitoring methods often rely on correlation between $p$-values over sequences of interim analyses. Those correlations follow from a combination of properties of specific statistics and requirements for the observations. For example, a running average over observations combined with a limit on the value of each observation implies that the running average will have limited change as the number of observations grows large. 

Another method to avoid extremely stringent conditions for large numbers of interim analyses is to require repeated statistically significant results \cite{bax_sarkar_shtoff} over multiple interim analyses to declare a statistically significant result for the AB test as a whole. Requiring repetition makes each interim significance requirement less stringent by multiplying the $p$-value required for significance by the number of repeats required -- a consequence of nearly uniform validation \cite{bax_compression,bax16}. 

Compared to other methods, requiring repetition removes the need to know a priori, prove, or empirically evaluate whether side conditions on the data hold, because it does not rely on correlations among $p$-values over sequences of interim analyses. Instead, it allows (and requires) the data to prove itself. This makes it simple and widely applicable, yet effective, as has been shown for bounded tests \cite{bax_sarkar_shtoff}, and as we show in this paper for unbounded tests. 

Section \ref{prelim_sec} reviews the concepts of $\alpha$ spending for sequential testing, nearly uniform validation, and requiring repeated significance. Section \ref{unbounded_sec} shows how to apply those concepts to unbounded tests. Section \ref{geo_sec} explores using $\alpha$ spending based on geometric series to produce test plans that control Type I error. Section \ref{imp_sec} shows that, for unbounded tests, the $p$-values required for statistically significant results must grow eventually, but we can delay that using $\alpha$ spending based on $p$-series with exponents near one. Section \ref{disc_sec} concludes with a discussion of possibilities for future work.

\section{Preliminaries} \label{prelim_sec}
For a statistical test, a null hypothesis is an assertion about the (unknown) out-of-sample statistics or underlying distribution that generates the observations in the test. For a pre-specified metric measured on the in-sample observations, a $p$-value is the probability that the metric takes a value at least as extreme as the measured value under the null hypothesis. 

A Type I error occurs when a statistical test rejects the null hypothesis but the null hypothesis holds -- a false positive. In test planning, we specify an acceptable upper bound for the probability of Type I error. Call that bound $\alpha$. If the test rejects the null hypothesis, then we say it is rejected at confidence level $1 - \alpha$. 

A simple test plan specifies a metric and $\alpha$. After all in-sample observations are collected, the test measures the metric and produces a $p$-value based on the measured value. If $p \leq \alpha$, then the null hypothesis is rejected. 

For example, in an AB test, the null hypothesis may be that both treatments A and B (one of which may be the control) have the same average for some metric. If the $p$-value for the test is below a specified $\alpha$, then we reject the null hypothesis, asserting that the difference in treatments has a statistically significant impact on the metric, with confidence $1 - \alpha$. 

For many AB tests, it is useful to allow early stopping -- ending the test with a statistically significant result before collecting all planned observations. In tests of user interfaces or advertisements on apps and web pages, this can reduce time to market and save money. In medical tests, it can save lives by delivering useful treatments to the control group as well as the treatment group as soon as there is a positive result. 

Early stopping can create the risk of data dredging or $p$-hacking \cite{smith02,wasserstein02,young11} in time, by stopping when a metric is positive but continuing when it is negative. That can introduce large gaps between in-sample and out-of-sample results, due to a selection effect \cite{bax_compare_prices,pocock05,wang16,huang17}. 

In general, stopping when the $p$-value is less than $\alpha$ for the data collected from the start of the test to an interim analysis fails to bound the Type I error probability by $\alpha$. If the probability of Type I error at each of $d$ interim analyses is at most $\alpha$, then the probability of Type I error at any interim analysis may be as high as $d \alpha$. (In the worst case, each interim Type I error probability is $\alpha$, and those events are disjoint.)

A technique called $\alpha$ spending avoids this problem by partitioning $\alpha$ into $\alpha_1, \ldots, \alpha_d$ that sum to $\alpha$ and allowing early stopping when the $p$-value for interim analysis $k$ (call it $p_k$) has $p_k \leq \alpha_k$. In this case, the probability of Type I error at each interim analysis $k$ is at most $\alpha_k$, so the overall probability of a Type I error is at most $\alpha$, by the sum bound on the union of probabilities, also known as Boole's inequality \cite{boole47,casella01}. (Dividing evenly: $\alpha_k = \frac{\alpha}{d}$ is known as a Bonferroni correction \cite{bonferroni36}.) The method of $\alpha$ spending has a long history in group sequential testing \cite{wald47,haybittle71,peto76,pocock77,wang87,jennison00,lewis23}, especially for medical trials. 

As $\alpha$ is partitioned more finely over more interim analyses, the $p$-value requirements for statistical significance become more stringent. Requiring repeated significance in order to stop the test counteracts this effect, because it allows less stringent requirements for statistical significance while maintaining the $\alpha$ bound on probability of Type I error. If we require $p_k \leq \delta_k$ for at least $r$ interim analyses before rejecting the null hypothesis and ending the test, then the probability of Type I error is at most $\frac{\delta_1 + \ldots + \delta_d}{r}$, because the probability of Type I error is then maximized by having $p_k \leq \delta_k$ co-occur for exactly $r$ different $k$ values in $1, \ldots, d$ when it occurs for any $k$. 

Requiring $r$ failures for a Type I error allows us to divide the sum bound by $r$ (see Figure \ref{fig_nearly_uniform}), or, equivalently to multiply each budgeted $\alpha_k$ by $r$ to get $p$-value requirements for interim analyses, since $\alpha_1 + \ldots + \alpha_d = \alpha$ implies that if we set $p$-value requirements $\delta_k = \alpha_k r$, then $\frac{\delta_1 + \ldots + \delta_d}{r} = \alpha$. 

\begin{figure}
\centering
\includegraphics[width=3.5in]{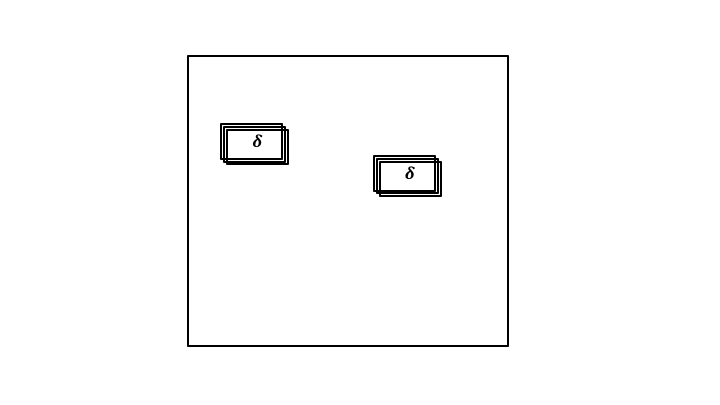} 
\caption{\textbf{Requiring Repetition}. For $r = 3$ required repeats over $d = 6$ interim analyses each with interim analysis Type I error probability $\delta$, the worst-case probability of Type I error for the entire test is $\frac{6 \delta}{3} = 2 \delta$, since 3 interim analyses must all have Type I errors for the entire test to have it. (If six carpets each cover $\delta$ area, then at most $2 \delta$ area can be covered at least three deep.) (Figure from \cite{bax16}.)} \label{fig_nearly_uniform}
\end{figure}

An earlier paper \cite{bax_sarkar_shtoff} applies this insight to develop repetition-based test plans for tests with a priori limits on the number $d$ of interim analyses. For that case, it is possible, for example, to spend $\alpha$ evenly over interim analyses: $\alpha_k = \frac{\alpha}{d}$ for all $k$ in $1, \ldots, d$. For unbounded $d$, that is impossible. This paper focuses on methods to spend $\alpha$ and select the number of required repeats for unbounded tests. 

\section{From Bounded to Unbounded Tests} \label{unbounded_sec}
Following the notation from \cite{bax_sarkar_shtoff}, we will refer to interim analyses as decision points. For each decision point $t$, let $\delta_t$ be the upper bound for the $p$-value at decision point $t$ to be considered significant, and let $r_t$ be the number of significant $p$-values over decision points $1$ to $t$ required to end the test and reject the null hypothesis. 

As an example, suppose that for each decision point $t$, if at least half of the decision points $k$ from $1$ to $t$ have $p_k \leq \delta_k$, where $p_k$ is the $p$-value at decision point $k$, then we end the test and declare a significant result -- rejecting the null hypothesis. Then $r_t = \lceil \frac{t}{2} \rceil$ for all $t \geq 1$.

\begin{lem} \label{lp_lemma}
Consider a test, specified by an integer $d > 0$ and nonnegative values $\delta_1, \ldots, \delta_d$ and $r_1 \leq \ldots \leq r_d$, that stops and rejects the null hypothesis at the first decision point $t$ for which at least $r_t$ of the decision points $k$ from $1$ to $t$ each have $p_k \leq \delta_k$, with $p_k$ the $p$-value for the null hypothesis based on the data up to decision point $k$. 
\begin{itemize}
\item Let $F_k$ be the event that, at decision point $k$, $p_k \leq \delta_k$, but the null hypothesis holds. (Refer to this event as false significance at decision point $k$.)
\item Let $A(S)$ be the event that $F_k$ occurs for all $k \in S$ and does not occur for any $k \not\in S$:
\be
\forall S \subseteq \sset{1, \ldots, d}: \hbox{ } A(S) = \left(\bigcap_{k \in S} F_k\right) \cap \left(\bigcap_{k \not\in S} \overline{F_k}\right).
\ee
(These are the atoms of the joint probability space over events $F_k$.)
\item For $S \subseteq \sset{1, \ldots. d}$, let $p_S$ be the probability of $A(S)$:
\be
p_S \equiv \prob{A(S)}.
\ee
\item Let $Q$ be the set of sets $S$ such that false significance at all decision points indicated by $S$ (and at no others) results in a Type I error. This occurs if there is some $t \leq d$ with at least $r_t$ elements from $\sset{1, \ldots, t}$ in $S$:
\be
Q \equiv \sset{S | \exists t \in \sset{1, \ldots, d}: \left|\sset{1, \ldots, t} \cap S\right| \geq r_t}.
\ee
\end{itemize}
Then any solution to the following linear program is a joint probability distribution of events $F_k$ that maximizes Type I error:
\be
\max_{\left(p_S | S \subseteq \sset{1, ..., d}\right)} \sum_{S \in Q} p_S
\ee
given constraints:
\begin{enumerate}
\item The probabilities of the atoms of the joint distribution sum to one:
\be
\sum_{S \subseteq \sset{1, \ldots, d}} p_S = 1.
\ee
\item The probabilities are nonnegative:
\be
\forall S \subseteq \sset{1, \ldots, d}: \hbox{ } p_S \geq 0.
\ee
\item Probability of false significance at decision point $k$ ($p_k \leq \delta_k$ and null hypothesis holds) is at most $\delta_k$:
\be
\forall k \in \sset{1, \ldots, d}: \hbox{ } \sum_{S \supseteq \sset{k}} p_S \leq \delta_k.
\ee
\end{enumerate}
\end{lem}

\begin{proof}
By the definition of $Q$, the maximized sum is over the set of all probability space atoms $A(S)$ that produce Type I errors. Since each probability atom's event is a fully-specified joint outcome for false significances $F_1$ to $F_k$, the atoms correspond to mutually exclusive events in the joint probability space over $F_1$ to $F_k$. So there is no need to subtract probabilities of intersections for those events -- they do not intersect each other. Hence their sum is the probability of Type I error given a joint distribution specified by $p_S$ values. 

The first two constraints enforce that the $p_S$ values in the solution of the linear program correspond to a probability distribution -- probabilities are nonnegative and sum to one. The last constraint enforces a bound of $\delta_k$ on the probability of each $F_k$. (Since we require a $p$-value $p_k \leq \delta_k$ for significance at each decision point $k$, the probability of significance at $k$ if the null hypothesis holds, which is the probability of $F_k$, is at most $\delta_k$.)
\end{proof}

Bounding the result from the linear program in Lemma \ref{lp_lemma} produces a simpler bound on Type I error:

\begin{thm} \label{delta_thm}
Given nonnegative $\delta_1, \delta_2, \ldots, \delta_d$ and $r_1 \leq r_2 \leq \ldots \leq r_d$, consider a test that stops at the first decision point $t$ at which at least $r_t$ of the decision points $k$ from $1$ to $t$ each have $p_k \leq \delta_k$, with $p_k$ the $p$-value at decision point $k$ based on data from the start of the test until decision point $k$ and rejects the null hypothesis, or otherwise runs through decision point $d$ without rejecting the null hypothesis. An upper bound on the Type I error probability for that test is
\be
\sum_{t} \frac{\delta_t}{r_t}
\ee
\end{thm}

\begin{proof}
We want to show that 
\be
\sum_{S \in Q} p_S^* \leq \sum_{t} \frac{\delta_t}{r_t},
\ee
where $\left(p_S^* | S \subseteq \sset{1, \ldots, d}\right)$ is an optimal solution for the linear program of Lemma \ref{lp_lemma}, which maximizes Type I error probability given that the null hypothesis holds. 

Recall that
\be
Q = \sset{S | \exists t \in \sset{1, \ldots, d}: \left|\sset{1, \ldots, t} \cap S\right| \geq r_t}.
\ee
For each $S \in Q$, let $t(S)$ be the decision point at which the test ends, given joint event $A(S)$:
\be
t(S) \equiv \min \sset{t | S \cap \sset{1, \ldots, t}| \geq r_t}.
\ee
Also, let
\be
B(S) \equiv S \cap \sset{1, \ldots, t(S)}.
\ee
Note that $|B(S)| \geq r_{t(S)}$ because at least $r_{t(S)}$ significant results ($p_k \leq \delta_k$) are required to end the test at decision point $t(S)$.

Since $|B(S)| \geq r_{t(S)}$, 
\be
\sum_{S \in Q} p_S^* 
\ee
\be
\leq \sum_{S \in Q} \sum_{k \in B(S)} \frac{p_S^*}{r_{t(S)}}.
\ee
For all $k \in B(S)$, $k \leq t(S)$, because $B(S) \subseteq \sset{1, \ldots, t(S)}$. Also we assume $r_1 \leq \ldots \leq r_{t(S)}$, so $\forall k \in B(S): r_k \leq r_{t(S)}$. So the previous sum is
\be
\leq \sum_{S \in Q} \sum_{k \in B(S)} \frac{p_S^*}{r_k}.
\ee
Substitute $t$ for $k$ and reorder the sums:
\be
= \sum_{t \in \sset{1, \ldots, d}} \sum_{S \in Q \land t \in B(S)} \frac{p_S^*}{r_t}.
\ee
Remove $\frac{1}{r_t}$ from the last sum:
\be
= \sum_{t \in \sset{1, \ldots, d}} \frac{1}{r_t} \sum_{S \in Q | t \in B(S)} p_S^*.
\ee
Expand the last sum to be over all $S \supseteq \sset{t}$:
\be
\leq \sum_{t \in \sset{1, \ldots, d}} \frac{1}{r_t} \sum_{S \supseteq \sset{t}} p_S^*.
\ee
By the last constraint of the linear program in Lemma \ref{lp_lemma}, the last sum is at most $\delta_t$:
\be
\leq \sum_{t \in \sset{1, \ldots, d}} \frac{\delta_t}{r_t}.
\ee
\end{proof}

For an unbounded test, we do not know the number of decision points a priori. So to bound the Type I error probability, we must bound the bounds from Theorem \ref{delta_thm} over values $d$ from $1$ to $\infty$:

\begin{thm} \label{delta_thm_inf}
For an unbounded test, specified by infinite sequences of nonnegative $\delta_1, \delta_2, \ldots$ and of positive integers $r_1 \leq r_2 \leq \ldots$, that stops at the first decision point $t$ at which at least $r_t$ of the decision points $k$ from $1$ to $t$ each have $p_k \leq \delta_k$, with $p_k$ the $p$-value at decision point $k$ based on data from the start of the test until decision point $k$ and declares the null hypothesis false, or otherwise runs forever without declaring the null hypothesis to be false, an upper bound on the Type I error probability for that test is
\be
\sum_{t = 1}^{\infty} \frac{\delta_t}{r_t}.
\ee
\end{thm}

\begin{proof}
Consider two cases: (1) $d$ is infinite, and (2) $d$ is finite, but could be any finite positive integer. In case (1) the test runs forever, therefore it never stops and hence never rejects the null hypothesis. But rejecting the null hypothesis is a necessary condition for a Type I error, so there is no Type I error. 

To handle case (2) we must find an upper bound for the bound from Theorem \ref{delta_thm} over all positive integer values for $d$:
\be
\max_{0 < d < \infty} \sum_{t = 1}^{d} \frac{\delta_t}{r_t}.
\ee
Since $\forall t: \delta_t \geq 0$, the sums for successive values of $d$ are a monotonically increasing series. So its maximum is
\be
= \lim_{d \rightarrow \infty} \sum_{t = 1}^{d} \frac{\delta_t}{r_t}.
\ee
And this is the definition of the infinite series:
\be
\sum_{t = 1}^{\infty} \frac{\delta_t}{r_t}.
\ee
\end{proof}

We can restate Theorem \ref{delta_thm_inf} from an $\alpha$-spending point of view:

\begin{cor} \label{alpha_cor}
Given $r_1 \leq r_2 \leq \ldots$ and a set of nonnegative values $\alpha_t$ such that
\be
\sum_{t = 1}^{\infty} \alpha_t = \alpha
\ee
if a test stops at the first decision point $t$ at which at least $r_t$ of the decision points up to $t$ have $p$-values with:
\be
p \leq \alpha_t r_t,
\ee
then the test has Type I error probability at most $\alpha$.
\end{cor}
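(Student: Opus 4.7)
The plan is to reduce Corollary \ref{alpha_cor} directly to Corollary \ref{delta_cor} via a simple change of variables, so the proof will amount to one substitution and a cancellation. First I would define interim significance thresholds $\delta_t := \alpha_t r_t$ for every $t$. With these thresholds, the hypotheses of Corollary \ref{delta_cor} are satisfied verbatim: the sequence $r_1 \leq r_2 \leq \ldots$ is already assumed to be nondecreasing, and the stopping rule ``at least $r_t$ of the decision points $k \leq t$ have $p_k \leq \alpha_k r_k$'' is exactly the rule ``at least $r_t$ of them have $p_k \leq \delta_k$'' that Corollary \ref{delta_cor} analyzes.

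Having set up the substitution, I would invoke Corollary \ref{delta_cor} to bound the type 1 error probability by
$$\sum_t \frac{\delta_t}{r_t} \;=\; \sum_t \frac{\alpha_t r_t}{r_t} \;=\; \sum_t \alpha_t \;=\; \alpha,$$
which is the desired conclusion.

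Because the argument is just a cancellation after one renaming, there is no real obstacle. The one point that needs care is interpretive rather than mathematical: the inequality ``$p \leq \alpha_t r_t$'' in the statement should be read decision-point-wise, i.e.\ the threshold applied to $p_k$ is $\delta_k = \alpha_k r_k$, not a single $t$-dependent value imposed uniformly on all earlier $p_k$. Once that reading is fixed, matching the notation to Corollary \ref{delta_cor} is immediate, and nothing beyond the substitution above is required.
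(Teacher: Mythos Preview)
Your argument is correct and is exactly the paper's own proof: set $\delta_t = \alpha_t r_t$ and invoke Corollary~\ref{delta_cor}, after which $\sum_t \delta_t/r_t = \sum_t \alpha_t = \alpha$. Your clarifying remark that the threshold must be read decision-point-wise ($p_k \leq \alpha_k r_k$) is a helpful gloss the paper leaves implicit.
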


\begin{proof}
Let $\delta_t = \alpha_t r_t$ and apply Theorem \ref{delta_thm_inf}.
\end{proof}

\section{Geometric $\alpha$-Spending for Unbounded Tests} \label{geo_sec}
For bounded tests -- ones that have an a priori limit $d$ on the number of decision points -- it is possible to use uniform $\alpha$ spending, with all $\alpha_k = \frac{\alpha}{d}$, where $d$ is the limit on the number of decision points. Similarly, it is possible to budget $\alpha$ so that the $p$-value requirement for significance, $\delta_t = \alpha_t r_t$, is equal for all $t \in \sset{1, \ldots, d}$. Unbounded tests require other strategies.

One strategy for unbounded tests, \textit{geometric $\alpha$-spending} \cite{bax_sarkar_shtoff}, allocates the same fraction of the remaining $\alpha$ budget to each successive decision point: select a ``withdrawal rate" $0 < w < 1$ and set $\alpha_1 = w \alpha$, $\alpha_2 = w (1 - w) \alpha$, and in general $\alpha_t = w (1 - w)^{t - 1} \alpha$. Decreasing $w$ decreases early $\alpha_t$ values, saving more of $\alpha$ for later ones. Geometric spending works for unbounded tests, since the sum of $\alpha_t$ values is the sum of a geometric series that sums to $\alpha$ in the limit. 

Applying Corollary \ref{alpha_cor} to geometric $\alpha$ spending, if the test plan is to require at least $r_t = \lceil u t \rceil$ significant results over decision points 1 to $t$ to stop at decision point $t$, then significance for each decision point $t$ requires
\be
p_t \leq w (1 - w)^{t - 1} \alpha \lceil u t \rceil \label{geo_formula}
\ee
for a value $0 < w < 1$ specified as part of the test plan. 

To find the maximum required $p_t$, take the derivative of the RHS of Inequality \ref{geo_formula} with respect to $t$, set it to zero, and solve for $t$: 
\be
t_{\min} \approx \frac{-1}{\ln (1 - w)}. \label{geo_tmin}
\ee
(We have $\approx$ rather than $=$ only because we use $ut$ in place of $\lceil u t \rceil$ to find $t_{\min}$.) 

Conversely, solve this for $w$ to find a value that makes the least stringent $p_t$ requirement occur at decision point $t$:
\be
w \approx 1 - e^{-1/t}. \label{geo_wmin}
\ee
This can be useful if we have some idea how long the test is likely to run before achieving statistically significant results. Equations \ref{geo_tmin} and \ref{geo_wmin} show that decreasing the ``withdrawal" rate $w$ increases the number of decision points $t$ before reaching the maximum (least stringent) required $p_t$. 

Figure \ref{fig_zrhovw} compares using geometric $\alpha$ spending with repetition requirement $u = 0.2$ to an always-valid method based on autocorrelation among averages \cite{waudbysmith24} (their Equation 8, page 8), with probability of Type I error at most $\alpha = 0.05$. (We plot $Z$ scores rather than $p$ values for ease of presentation, since $Z$ scores are familiar to many practitioners and have a 1-1 correspondence with $p$ values. The plotted $Z$ scores are the number of standard deviations from the mean of a standard normal such that the left and right tails together have probability $p$. For comparison, a test with a single decision point that spends all of $\alpha = 0.05$ on that decision point would require $Z \approx 1.96$ for $p = 0.05$.) The $Z$ score required for significance for that method is
\be
\sqrt{\frac{2(t \rho^2 + 1)}{t \rho^2} \ln \left(\frac{\sqrt{t \rho^2 + 1}}{\alpha}\right)},
\ee
with $\rho$ a parameter that, like $w$ for our method, controls the $t$ value at which the $Z$ score reaches a minimum. Both methods apply to unbounded tests.

For $u = 0.2$, repetition generally produces lower $Z$ score requirements. However, requiring repetition with $u = 0.2$ may require a test to run for up to 20\% longer, if it remains significant once it becomes significant for some $t$ value. Of course, due to the less stringent $Z$ score requirement, requiring repetition may also result in earlier stopping than for the other always-valid method. For $u \approx 0.15$, both methods have similar $Z$ score minimums over $t$. 

\begin{figure}
\centering
\includegraphics[width=3.5in]{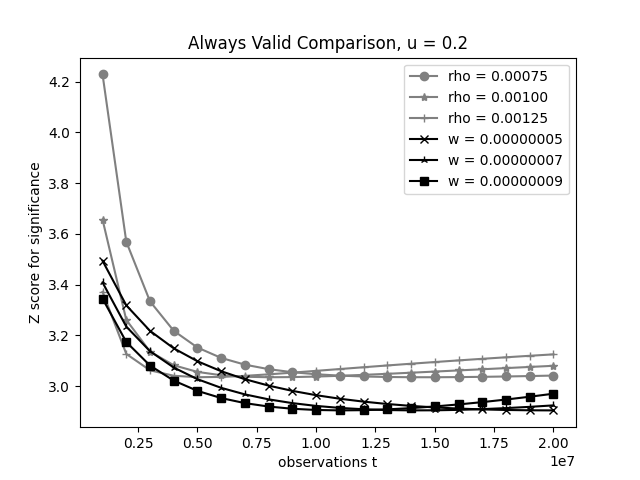}
\caption{\textbf{Comparison to another always-valid method.} For $u = 0.2$, requiring repetition gives less stringent requirements for significance than a method based on autocorrelation for averages  \cite{waudbysmith24}.} \label{fig_zrhovw}
\end{figure}

\section{Nearly Impossible is Possible} \label{imp_sec}
Figure \ref{fig_zrhovw} has $Z$ score curves that increase after reaching a minimum, prompting the question: ``Is there a way to spend $\alpha$ that gives a flat curve for an unbounded test, keeping the required $p$-value from shrinking to zero, or, equivalently, the required $Z$ score from becoming unbounded as $t \rightarrow \infty$?"

To answer that question, first consider that, in general, for an unbounded test we require 
\be
\sum_{t = 1}^{\infty} \alpha_t \leq \alpha
\ee
to avoid over-spending $\alpha$. If we let 
\be
\alpha_t = x_t \alpha,
\ee
then we require 
\be
\sum_{t = 1}^{\infty} x_t \leq 1,
\ee
so we need $x_t$ to be a convergent series. 

Now we show that the answer to the question is a negative of the form: ``It is not quite possible, but nearly so."

\begin{thm} \label{impossible_thm}
It is impossible to have all $p$-value requirements $\delta_t$ given by Corollary \ref{alpha_cor} equal for an unbounded test if $r_t = \lceil u t \rceil$ for a constant $u > 0$.
\end{thm}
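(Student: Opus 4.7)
The plan is to argue by contradiction, reducing the statement to the divergence of the harmonic series. Suppose a common value $\delta_t = \delta$ works for all $t$. Since $\delta$ must be strictly positive for significance ever to be achievable, we may assume $\delta > 0$. By Corollary \ref{alpha_cor}, the relationship tying $\delta_t$ to the $\alpha$-spending sequence is $\delta_t = \alpha_t r_t$, so equal $\delta_t$ forces
\begin{equation}
\alpha_t \;=\; \frac{\delta}{r_t} \;=\; \frac{\delta}{\lceil u t \rceil}.
\end{equation}
To maintain the budget, Corollary \ref{alpha_cor} requires $\sum_{t=1}^\infty \alpha_t = \alpha < \infty$.

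Next I would compare $\sum_t 1/\lceil u t \rceil$ to the harmonic series. Using the trivial bound $\lceil u t \rceil \le u t + 1 \le (u+1) t$ valid for all $t \ge 1$, one gets
\begin{equation}
\sum_{t=1}^{\infty} \frac{\delta}{\lceil u t \rceil} \;\ge\; \frac{\delta}{u+1}\sum_{t=1}^{\infty} \frac{1}{t} \;=\; \infty.
\end{equation}
Hence $\sum_t \alpha_t$ diverges, contradicting the requirement $\sum_t \alpha_t = \alpha$ (or even the weaker $\sum_t \alpha_t \le \alpha$ from Corollary \ref{delta_cor}).

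There is no real obstacle here; the whole argument rests on the identification of $\alpha_t = \delta/\lceil u t \rceil$ as a scaled harmonic tail. The only care needed is to rule out the degenerate choice $\delta = 0$ (which is excluded because it would make the test unable ever to stop with a significant result) and to handle the ceiling cleanly, which the one-line estimate $\lceil u t \rceil \le (u+1) t$ does. This also sets up the contrast with the next results in Section \ref{imp_sec}: any convergent $\sum 1/r_t^{1+\epsilon}$-style comparison (e.g.\ $p$-series with exponent slightly above one) remains summable, which is why the impossibility is only barely an impossibility.
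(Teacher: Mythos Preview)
Your proof is correct and follows essentially the same approach as the paper: assume a common $\delta$, deduce $\alpha_t=\delta/\lceil ut\rceil$, and obtain a contradiction by comparison with the harmonic series. Your bound $\lceil ut\rceil\le ut+1\le(u+1)t$ is in fact a bit cleaner than the paper's route, which bounds $\lceil ut\rceil<ut+1$ and then uses a partial-fraction identity to compare $\sum 1/(t+1/u)$ to $\sum 1/t$.
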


\begin{proof}
Recall that
\be
\delta_t = \alpha_t r_t = x_t \alpha \lceil u t \rceil.
\ee
Suppose these values are all equal to a constant $c$. Then, for all $t \in \sset{1, 2, \ldots}$, 
\be
x_t = \frac{c}{\alpha \lceil u t \rceil} > \frac{c}{\alpha (u t + 1)}
\ee
\be
= \frac{c}{\alpha u} \left(\frac{1}{t + \frac{1}{u}}\right).
\ee
We require the sum $x_1 + x_2 + \dots \leq 1$, but the sum of these terms diverges. Informally, it diverges because the terms are on the order of harmonic series terms: $\frac{1}{t}$, and the harmonic series diverges.

More formally, note that:
\be
\frac{1}{t + \frac{1}{u}} = \frac{1}{t} - \frac{\frac{1}{u}}{t (t + \frac{1}{u})}.
\ee
So
\be
\sum_{t = 1}^{\infty} x_t > 
\ee
\be
\frac{c}{\alpha u} \left[\sum_{t = 1}^{\infty} \frac{1}{t} - \frac{1}{u} \sum_{t = 1}^{\infty} \frac{1}{t (t + \frac{1}{u})} \right] \label{sums}
\ee
The second sum is less than
\be
\sum_{t = 1}^{\infty} \frac{1}{t^2}.
\ee
We know that the $p$-series 
\be
\sum_{t = 1}^{\infty} \frac{1}{t^v}
\ee
diverges for $0 \leq v \leq 1$ and converges for $v > 1$. So in Expression \ref{sums}, the first sum diverges and the second converges, implying that the expression diverges.
\end{proof}

In short, we cannot have $x_t$ proportional to $\frac{1}{t}$ because its $p$-series diverges. However, for any $v > 1$, we can use $x_t$ proportional to $\frac{1}{t^v}$, since its $p$-series converges. So, in a sense, Theorem \ref{impossible_thm} is the most nearly possible impossible result possible. 

Since 
\be
\sum_{t = 1}^{\infty} \frac{1}{t^v} = \zeta(v),
\ee
where $\zeta()$ is the Riemann zeta function, we can set 
\be
\forall t: \hbox{ } x_t = \frac{1}{\zeta(v) t^v}
\ee
for any $v > 1$. As $v \rightarrow 1$, we flatten the $\delta_t$ curve, but $\zeta(v)$ increases, making the initial $p$-value requirement more stringent. For example, $\zeta(2) = \frac{\pi^2}{6}$, so we can set
\be
x_t = \frac{6}{\pi^2 t^2},
\ee
and get
\be
\delta_t = \frac{6}{\pi^2 t^2} \alpha \lceil u t \rceil \geq \frac{6 \alpha u}{\pi^2 t}.
\ee

To smooth the curve of $\delta_t$ over $t$, it is possible to remove the head of the $p$-series and use only its tail for $x_t$ values. Remove the first $s$ terms and let 
\be
h(s, v) = \sum_{t = 1}^{s} \frac{1}{t^v} 
\ee
be the head of the $p$-series consisting of the first $s$ terms. Removing that head yields
\be
x_t = \frac{1}{\left[\zeta(v) - h(s, v)\right] \left(t + s\right)^v},
\ee
so
\be
\delta_t = \frac{1}{\left[\zeta(v) - h(s, v)\right] \left(t + s\right)^v} \alpha \lceil u t \rceil.
\ee

Figure \ref{fig_zpseries_tail} shows the results of removing some initial terms from the $p$-series to produce $x_t$. The curve for $s = 0$ has no terms removed. Note that removing more terms lowers the required $Z$ scores for significance for most of the $t$ values. 

\begin{figure} 
\centering
\includegraphics[width=3.5in]{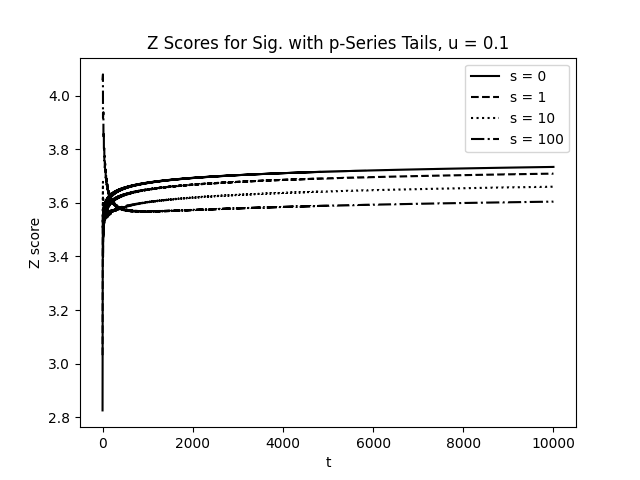}
\caption{\textbf{Headless $p$-series $\alpha$ spending: $\mathbf{Z}$ scores for significance.} For $p$-series $\alpha$ spending with $v = 1.1$, $\alpha = 0.05$, and $u = 0.1$, removing more initial values from the $p$-series before starting to allocate the series values to $x_t$ lowers required $Z$ scores on the right.} \label{fig_zpseries_tail}
\end{figure}

\section{Discussion} \label{disc_sec}
We have explored different ways to spend $\alpha$ with repetition requirements for unbounded tests, showing that it is possible to adjust the levels of significance required at each decision point by requiring different amounts of repetition. We have also shown that it is impossible to have a significance requirement that does not eventually become more stringent for an unbounded test with a repetition requirement that is a fixed fraction of the prior decision points, but we can stave that off for an arbitrary number of decision points by setting parameters appropriately. 

An earlier paper \cite{bax_sarkar_shtoff} that focused on bounded tests discussed how to balance control of Type I and Type II error, by spending a portion of $\alpha$ on a single loose requirement for significance at the end of the test, which is reminiscent of traditional sequential testing strategies, such as the Haybittle-Peto boundary \cite{haybittle71,peto76}. It is a challenge to imagine something similar for unbounded tests, since their ending decision points are not known a priori. Perhaps spending a portion of $\alpha$ for loose requirements at a (possibly unbounded) subsequence of the decision points, either with or without a repetition requirement, would prove useful. 

It would also be interesting to explore ways to combine repetition requirements with other approaches to continuous monitoring and always-valid bounds \cite{deng16,johari21,maharaj23,grunwald24,waudbysmith24}. In some cases, it may be possible to improve both types of bounds by combining their strategies. The challenge is that non-repetition methods sometimes make use of side conditions that are instead proven empirically in some sense with repetition requirements -- we need to analyze how these approaches overlap to understand how they can complement each other beyond mere duplication, if possible.

\bibliographystyle{unsrt}
\bibliography{bax}

\end{document}